\newtheorem{theorem}{Theorem}
\begin{document}

\allowdisplaybreaks

\newcommand{\R}{\mathbb{R}}

\newcommand{\argmin}{\mathop{\mathrm{argmin}}}

\newcommand{\parahead}[1]{\noindent \textbf{#1.}}

\bibliographystyle{ACM-Reference-Format}

\title{A Non-Parametric Choice Model That Learns How~Users~Choose~Between~Recommended~Options}

\author{Thorsten Krause}
\orcid{0000-0002-3830-7708}
\email{thorsten.krause@ru.nl}
\affiliation{%
  \institution{Radboud University}
  \city{Nijmegen}
  \country{The Netherlands}
}
\author{Harrie Oosterhuis}
\orcid{0000-0002-0458-9233}
\email{harrie.oosterhuis@ru.nl}
\affiliation{%
   \institution{Radboud University}
  \city{Nijmegen}
  \country{The Netherlands}
}

\begin{abstract}
Choice models predict which items users choose from presented options.
In recommendation settings, they can infer user preferences while countering exposure bias.
In contrast with traditional univariate recommendation models, choice models consider which competitors appeared with the chosen item.
This ability allows them to distinguish whether a user chose an item due to \emph{preference}, i.e., they liked it; or \emph{competition}, i.e., it was the best available option.
Each choice model assumes specific user behavior, e.g., the multinomial logit model.
However, it is currently unclear how accurately these assumptions capture actual user behavior, 
how wrong assumptions impact inference,
and whether better models exist.

In this work, we propose the \emph{learned choice model for recommendation (LCM4Rec)}, a non-para\-metric method for estimating the choice model.
By applying kernel density estimation, LCM4Rec infers the most likely error distribution that describes the effect of inter-item cannibalization and thereby characterizes the users' choice model.
Thus, it simultaneously infers \emph{what} users prefer and \emph{how} they make choices.
Our experimental results indicate that our method (i) can accurately recover the choice model underlying a dataset; (ii) provides robust user preference inference, in contrast with existing choice models that are only effective when their assumptions match user behavior; and (iii) is more resistant against exposure bias than existing choice models. 
Thereby, we show that learning choice models, instead of assuming them, can produce more robust predictions. 
We believe this work provides an important step towards better understanding users' choice behavior.
\end{abstract}

\begin{CCSXML}
<ccs2012>
<concept>
<concept_id>10002951.10003317.10003347.10003350</concept_id>
<concept_desc>Information systems~Recommender systems</concept_desc>
<concept_significance>500</concept_significance>
</concept>
</ccs2012>
\end{CCSXML}

\ccsdesc[500]{Information systems~Recommender systems}

\keywords{Recommender Systems, User Choice Modeling, Exposure Bias}

\maketitle

\section{Introduction}\label{Sect. Introduction}

Traditional implicit collaborative filtering methods assume that all items are chosen independently of each other \citep{rendle2012bpr, petrov2023gsasrec}. 
Accordingly, they estimate an item's relevance score purely from information about which users interacted with it~\citep{hu2008collaborative}. 
In reality, users must often decide for one of multiple items that the recommender system exposes together, for example, when ordering lunch \citep{li2024recommender}, booking a flight \citep{kan2024personalized}, or choosing a movie~\citep{steck2021deep}. 
Consequently, the choices for one item and against others are dependent.

One prominent class of models than accounts for dependencies between choices are so-called choice models.
Choice models predict the choice probabilities according to the preference towards each option, while also considering the competitive effect between options.
They model this competitive effect through a probabilistic noise distribution, which is their key-characteristic. For instance, the multinomial logit model (MNL) \citep{train2009discrete}, which is equivalent to the softmax, is characterized by its Gumbel noise distribution.
Many sequential models, such as BERT4Rec \citep{sun2019bert4rec}, employ the softmax loss to predict the next user interaction. 
Furthermore, \citet{krause2024mitigating} showed that choice models are significantly more robust against exposure bias, particularly against exposure bias from competition among items.
By providing choice probabilities, choice models can also contribute to related tasks such as balancing trade-offs in responsible recommendation~\citep{g2024overcoming} or, potentially, informing \emph{whether} to make a recommendation and \emph{how many} items to recommend.

However, existing approaches are only effective when the assumed choice model matches the true user behavior.
Employing the wrong choice model can bias choice probabilities \citep{train2009discrete}. 
This poses a significant limitation, since the most accurate choice model for a given use case is generally unknown.
Identifying it through benchmarking is practically infeasible since infinitely many possible choice models exist and some choice models, such as the popular probit, lack a closed-form analytical solution \citep{buscher2024weighting}.

In this work, we directly address this limitation by proposing the \emph{learned choice model for recommendation (LCM4Rec)}, a novel user-modeling method that simultaneously learns the most likely choice model and user preferences.
Instead of assuming the probabilistic noise distribution underlying the true users' choice model, we utilize a non-parametric kernel density estimator to approximate it.
Specifically, LCM4Rec maximizes the log-likelihood of observed interaction data, and thereby, infers both the most likely utility distribution (i.e., the user preferences) and noise distributions (i.e., the choice model) underlying their interactions.
Because kernel density estimators can be used to approximate any arbitrary probability distribution, LCM4Rec is applicable to any possible choice model.\footnote{Any choice model for which the error terms are IID.}
Moreover, because the choice model is inferred from interaction data, we do not rely on a priori assumptions about what choice model users follow,\footnote{To be precise, we only make the assumption that choice model error terms are IID.} in contrast with previous work~\citep{krause2024mitigating}.
Additionally, we investigate the negative implications from assuming an inaccurate choice model with respect to ranking, choice probability estimation and exposure bias robustness.

In summary, LCM4Rec resolves the need for benchmarking or assuming choice models and is applicable to any choice model.$^2$ %
It simultaneously learns user preferences towards items as well as the competitive effects between presented items.
Our experimental results on synthetic datasets indicate LCM4Rec is robust to what choice model underlies interaction data.
Whilst existing methods can only accurately infer preferences when their assumed choice model matches the true model, LCM4Rec provides reliable and robust performance regardless of the underlying model.
Furthermore, our results show that LCM4Rec is better at mitigating exposure bias.
We also found that employing inaccurate choice models skews choice probabilities and increases exposure bias vulnerability. 

To the best of our knowledge, we introduce the first method that both learns users preferences as how users choose between recommended options.
Besides the improved effectiveness and robustness of LCM4Rec, we believe our work contributes to the better modeling and understanding of user behavior whilst avoiding assumptions.

\section{Related work and Background}\label{Sect. Related Work}

\subsection{Discrete choice models for recommendation}
Discrete choice models estimate with which probability $P_{ij|C}$ a user $i\in I$ would choose an item $j\in J$ from a discrete choice set $C \in \mathcal{C}$ by utilizing the transitive information on which items are preferred over another. 
These models assume that each user would always chose the choice alternative $j \in C$ from which they expect the highest utility $U_{ij}$. 
Econometricians use discrete choice models for determining how a variable affects preferences \citep{haghani2021landscape} and operations researchers use them for solving optimization problems based on observed choice data \citep{corner1991decision}.
Discrete choice models have also been used for recommendation, most prominently under the term \emph{collaborative competitive filtering} \citep{yang2011collaborative}.
Their key difference from classical implicit feedback collaborative filtering models is that they require information about displayed choice alternatives.
In recent years, more such data has become available, for example in the \emph{finn.no} dataset~\cite{eide2021finn} or the 2024 RecSys Challenge dataset\footnote{https://recsys.eb.dk/dataset/}.

To account for the fact that humans do not to strictly chose transitively~\citep{busemeyer2014psychological}, discrete choice models are probabilistic. 
They model the total utility $U_{ij}$ as the sum of a deterministic utility $V_{ij}$ and a random error term $\epsilon_{ij}$ s.t.\ $U_{ij} = V_{ij} + \epsilon_{ij}$. 
In the recommendation context, the deterministic utility $V_{ij}$ corresponds to the relevance scores. 
The random error is interpreted as the share of utility that analysts cannot observe~\citep{train2009discrete}.
Thereby, it enables discrete choice models to capture that identical circumstances can result in different choices, or that succeeding choices can appear contradicting.

Our study focuses on the shape of the distribution that underlies $\epsilon_{ij}$.
The most commonly assumed error distribution is the IID Gumbel distribution of the multinomial logit model which results in the following softmax-like choice probabilities ~\citep{train2009discrete}:
\begin{equation}
    P_{ij \mid C} = 
    \frac{
        \exp\left(V_{ij}\right)
    }
    {
        \sum_{j'\in C}\exp\left(V_{ij'}\right)
    }.
\end{equation}
The closed analytical form and easy interpretability of its choice probabilities have contributed to the popularity of the multinomial logit model~\citep{train2009discrete}.
Its key property is its assumption of the irrelevance of independent alternatives (IIA), which states that the ratio of choice alternatives between two items does not depend on any third item's utility ~\citep{ben1999discrete}.
However, this key property is also a source of criticism ~\citep{ben1985discrete}.
Empirical studies indicate that the IIA assumption is inaccurate in many settings~\citep{benson2016relevance}.

\subsection{Non-parametric discrete choice models}\label{Subsect. Non-parametric likelihood estimation}
Instead of having the analyst guess the correct choice model, non-parametric estimators learn the error distribution.
This approach builds on the idea that the choice probabilities can be re-written as:
\begin{equation} \label{Eq. Choice probabilities no integral}
\begin{split}
    P_{ij\mid C} &= P(\forall j' \in C,\; j' \neq j \rightarrow U_{ij} > U_{ij'}) \\
     &= P(\forall j' \in C,\; j' \neq j \rightarrow V_{ij} + \epsilon_{ij} - V_{ij'} > \epsilon_{ij'}).
\end{split}
\end{equation}
Under the assumption that all error terms $\epsilon_{ij}$ are IID distributed to some random variable $\epsilon$ with cdf $F_\epsilon$ and pdf $\rho_\epsilon$, \citet{manski1975maximum} points out that the choice probabilities can be further reformulated as:
\begin{equation} \label{Eq. Likelihood}
    P_{ij\mid C} = \int \rho_{\epsilon} (e_{ij}) \prod_{j' \in C, j' \neq j} F_\epsilon(V_{ij} + e_{ij} - V_{ij'}) de_{ij},
\end{equation}
and proposes a \emph{maximum score estimator} that estimates a ranking function and that enables computing the effect of observed variables on the utility $V_{ij}$ consistently in the binary case with two choice options.
\citet{cosslett1983distribution} criticizes the maximum score estimator for being inapplicable to inferring choice probabilities and instead provides an estimator for the cdf $F_\epsilon$ in the binary case.
As outlined for the multivariate case by~\citet{briesch2010nonparametric}, the core idea is that given the deterministic utilities $V_{ij}$, the choice distribution exclusively depends on $F_\epsilon$ and $\rho_\epsilon$.
Thus, for $N$ observations ~$\{(i_n, j_n , C_n)\}_{n \leq N} \subseteq I \times J \times \mathcal{C}$, one can estimate $\epsilon$ by defining a broad family of distributions $\mathcal{F}$, and subsequently select the member which minimizes the negative log-likelihood:
\begin{equation}\label{Eq. Negative log-likelihood}
    \hat{F} \coloneqq \argmin_{F \in \mathcal{F}} - \sum_{n = 1}^N \log( P_{i_n j_n \mid C_n} ) %
    .
\end{equation} 
However, \citet{cosslett1983distribution} uses sorting and linear interpolation for estimating $F$, which is intractable for gradient descent. They also only consider the binary case. 
To the best of our knowledge, no other existing approach is directly applicable to the recommender systems setting either.
\citet{matzkin1992nonparametric} drops any assumptions on the shape of $V$, but derives a discrete optimization problem that returns a non-differentiable estimator.
\citet{horowitz1992smoothed} smoothes the maximimum score estimator but still cannot infer choice probabilities.
For the binary case, \citet{klein1993efficient} estimate the choice probabilities $P_{ij|C}$ directly. However, generalizing their method to the multivariate case requires density estimation with as many dimensions as choice alternatives, which is unfeasible for our setting where choice sets can contain 20 items or more \citep{eide2021finn}.
More recent works focus increasingly on challenges that are beyond current concerns in recommender systems research. For example, \citet{hirano2002semiparametric} develops a solution for random effect autoregressive models with non-parametric idiosyncratic shocks. 
Others use techniques that we deem incompatible with gradient descent-based optimization such as discrete estimators \citep{briesch2010nonparametric, ichimura1998maximum}, or linear programming \citep{chiong2017counterfactual, tebaldi2023nonparametric}.

\begin{table}[t]
  \caption{
  Cannibalization patterns according to two different choice models. For items $j_1$, $j_2$, $j_3$ with utilities $V_{ij_1} = 3$, $V_{ij_2} = 1$, $V_{ij_3} = 2$,
  the models agree on the choice probabilities between $j_1$ and $j_2$, but disagree when the option $j_3$ is also available.
  }
  \label{Tab. cannibalization example}
  \begin{tabular}{cclll}
    \toprule
    \multirow{ 2}{*}{Model} & \multirow{2}{*}{Choice set} & \multicolumn{3}{c}{Choice probability} \\
     &  & $j_1$ & $j_2$ & $j_3$ \\
    \midrule
    \multirow{ 2}{*}{Multinomial logit} & $\{j_1, j_2\}$ & 88.1\% & 11.9\% & — \\
     & $\{j_1, j_2, j_3\}$ & \textbf{66.5\%} & \textbf{9.0\%} & 24.5\% \\
     \midrule
    \multirow{ 2}{*}{Exponomial} & $\{j_1, j_2\}$ & 88.1\% & 11.9\% & —\\
     & $\{j_1, j_2, j_3\}$ & \textbf{70.2\%} & \textbf{5.3\%} & 24.5\% \\
  \bottomrule
\end{tabular}
\end{table}

\subsection{Competitive effects and cannibalization} \label{Subsect. Cannibalization}
The choice model also describes the competitive effect between items, i.e., how the availability and utility of other item options affects choice probabilities \citep{alptekinouglu2016exponomial,chiong2017counterfactual}.
Specifically, the concept of cannibalization refers to how the choice probabilities of items decrease due to a substitute's presence \citep{lipovetsky2014finding, alptekinouglu2016exponomial}.
Each choice model predicts a different effect in such cases.
For example, under the multinomial logit model all choice probabilities decrease by the same proportion, whereas under the exponomial choice model, low-utility alternatives lose a larger share of their choice probabilities than high-utility alternatives.
Table~\ref{Tab. cannibalization example} displays the differences between these models in an example scenario based on the one in \citet{alptekinouglu2016exponomial}.
Whilst both models agree in the ordering of the most probable choices, their predicted choice probabilities differ.
Consequently, in order to accurately infer user preferences, the choice model should match the true user behavior.
\citet{alptekinouglu2016exponomial} argue that the exponomial model better represents real-world cannibalization patterns than the multinomial logit model, while referring to the empirical observations by \citet{blattberg1989price}.

\subsection{Exposure bias in recommender systems}

Exposure bias occurs whenever a model's previous recommendation affect user behavior and subsequently the logged feedback data. 
Whilst different formal definitions of exposure bias exist, we follow previous work \citep{wan2022cross, krause2024mitigating} that considers a model \emph{exposure bias resistant} if the exposure distribution does not affect the model's asymptotic rankings.
This definition does not imply convergence towards the true ranking because other biases could still occur.

Mitigating exposure bias should be a priority for all stakeholders. 
It manipulates future model generations to recommend similar items as in the past, resulting in user-model feedback loops \cite{chen2023bias, jiang2019degenerate}. 
Through exposure bias, the model over-proportionally prefers popular items \cite{ferraro2019artist} and can fail to cater to the special interests of user subgroups \cite{abdollahpouri2020connection}. 
In the long-term, its outputs to different user groups can homogenize \cite{mansoury2020feedback} and in multi-stakeholder environments, the unfair exposure can translate into intra-provider unfairness \cite{abdollahpouri2020multi, dash2021umpire}.

\citet{krause2024mitigating} demonstrated that, in addition to \emph{overexposure}, which occurs when some items are exposed more often than others, exposure bias can result from \emph{competition}, which occurs when some items tend to be exposed with popular or unpopular alternatives.
They also demonstrated that discrete choice models can effectively reduce both forms of exposure bias. 
Their formal proof for unbiasedness, however, assumed that the model matches a ground-truth choice model. While using an inaccurate model could impair unbiasedness, \citeauthor{krause2024mitigating} did not further investigate implications for exposure bias.

\section{Method: The Learned Choice Model for Recommendation (LCM4Rec)}\label{Sect. Non-parametric Estimators}

Our goal is to propose a choice model approach for modeling user choices in recommender settings while avoiding assumptions about user behavior.
The non-parametric choice models discussed in Section~\ref{Subsect. Non-parametric likelihood estimation} prevent the need to assume users' true choice model, and thus appear to provide a great alternative to their parametric counterparts.
However, each is incompatible with state of the art recommender systems due to at least one of the following reasons:
\begin{enumerate*}[label=(\roman*)]
\item they utilize non-differentiable estimators and optimization methods;
\item they are only applicable to the binary choice cases where users are only presented with two options; or %
\item they are computationally complex for application on large choice sets.
\end{enumerate*}
As a result, the existing non-parametric methods do not provide any approach that is relevant to current recommender systems.

In this section, we address these shortcomings directly and propose our novel non-parametric choice model: the \emph{learned choice model for recommendation (LCM4Rec)}.
Its core feature is a smooth kernel density estimator that approximates the cdf $F_\epsilon$ and the pdf $\rho_\epsilon$ of the error distribution (Equation \ref{Eq. Likelihood}).
Importantly, the estimator can approximate any arbitrary error distribution and provide a differentiable likelihood function for them; thereby, LCM4Rec can approximate any possible choice model with IID error terms.
Moreover, LCM4Rec can be optimized to predict observed interaction data, which enables learning what choice model best explains user behavior, instead of assuming this a priori.
To the best of our knowledge, this makes LCM4Rec the first non-parametric choice model that is compatible with state-of-the-art recommender systems.

\subsection{Learned choice model architecture} \label{Subsect. Learned choice model architecture}
In order to approximate any arbitrary error distribution underlying a choice model, we use kernel density estimation to construct a family of functions $\mathcal{F}$ that contains approximately close members to any possible cdf $F_\epsilon$.
Subsequently, we search for a member of $\mathcal{F}$ that minimizes the negative log-likelihood of observed interaction data (equation $\ref{Eq. Negative log-likelihood}$).
We start our description of this procedure by defining and discussing our choice for $\mathcal{F}$.
Consider the family $\mathcal{F}$ of sigmoid kernel-based functions with the following shape:
\begin{equation}
     \hat{F}_K(x) \coloneqq \sum_{k=1}^K w_k \sigma\mleft( \frac{x - x_k}{h_k} \mright) \in C^\infty \mleft( \mathbb{R} \mright),
     \label{eq:cdfmodel}
\end{equation}
where $\sigma$ is the sigmoid function, $K \in \mathbb{N}$, $x_k \in \mathbb{R}$, $w_k \in \mathbb{R}_{>0}$, $h_k \in \mathbb{R}_{>0}$ and $\sum_{k=1}^K w_k = 1$.
In other words, each $\hat{F}_K$ is a weighted average of sigmoid functions scaled and translated by the $h_k$ and $x_k$ parameters.
Importantly, the derivative of $\hat{F}_K$ is numerically stable:
\begin{equation}
    \hat{\rho}_K(x) = \sum_{k=1}^K \frac{w_k}{h_k} \sigma\mleft( \frac{x - x_k}{h_k} \mright) \mleft( 1 - \sigma\mleft( \frac{x - x_k}{h_k} \mright) \mright) \in C^\infty(\mathbb{R}).
\end{equation}
Any function $\hat{F}_K \in \mathcal{F}$ is strictly monotonically increasing and bounded in the interval $[0, 1]$. 
Moreover, $\hat{F}_K$ and $\hat{\rho}_K(x)$ are continuously differentiable with respect to their parameters and to the utility estimates $V_{ij}$ so that we can search for optimal representatives with regard to the empirical log-likelihood function  (Equation~\ref{Eq. Negative log-likelihood}) via gradient descent.

While directly learning the parameters $x_k$ is possible, not restricting them leads to local minima w.r.t. the loss function. 
For this reason, we set $x_k$ to be uniformly spread out over a closed interval: 
\begin{equation}
    x_1 = -l,\ x_2 = l \mleft(\frac{2}{K-1} - 1\mright),\ \dots,\ x_K = l,
    \label{eq:uniformspread}
\end{equation}
for a learnable scale parameter $l>0$. 
This makes optimization much easier without restricting the number of learnable functions when $N$ is large enough.
To enforce $w_k$, $h_k$, and $\omega$ to be positive, we introduce $\alpha_k \in \mathbb{R}$, $\beta_k \in \mathbb{R}$, and $\lambda \in \mathbb{R}$, and define:
\begin{equation}
    w_k \coloneqq \mathrm{softmax}\mleft( \alpha_k \mright), \;\,
    h_k \coloneqq \frac{l}{K} \mathrm{softplus}\mleft( \beta_k \mright), \;\,
    l \coloneqq \mathrm{softplus}\mleft( \lambda \mright).
\end{equation}
For numerical stability, we bound $-0.1 < \beta_i < 5$ and $ 0.1 < \lambda < 10$.
These bounding constraints ensure that the estimator does not collapse during the first iterations when the estimated utilities are very small.
With our model of the error distribution, we still require a model of the utilities $V_{ij}$; our definition of these applies commonly-used learned-embedding interactions: 
\begin{equation}
    V_{ij} \coloneqq u_i \cdot v_j + c_j, \label{eq:utilitymodel}
\end{equation}
where $u_i \in \mathbb{R}^m$ and $v_j \in \mathbb{R}^m$ with $m \in \mathbb{N}$ are user- and item-embeddings, and $c_j$ is an item-specific constant as is common in discrete choice modeling.

This completes the model architecture underlying LCM4Rec:
the error distribution of $\epsilon$ is modelled using a kernel density estimator that applies a weighted average of sigmoid functions (Equation~\ref{eq:cdfmodel}) where an error term $\epsilon_{ij}$ is IID sampled per user-item combination;
the user utilities are modelled by the dot-product between two learned embeddings and an item-specific constant (Equation \ref{eq:utilitymodel}); and the final utility is simply the sum of these: $U_{ij} = V_{ij} + \epsilon_{ij}$.
Recall that choice models assume the item with the highest utility is chosen (see Section~\ref{Sect. Related Work}).
Whilst our LCM4Rec is extremely expressive and can capture any arbitrary error distribution, as we show below, we have restricted it by assuming the error terms $\epsilon_{ij}$ are IID independent.
We note that this restriction rules out some choice models such as the nested logit model~\citep{train2009discrete}; we leave the extension of our model to the more general non-IID setting as future work.

Finally, we show that LCM4Rec can approximate any choice model with IID error terms.
Specifically, we show that for any cdf $F$, there always exists a member $\hat{F} \in \mathcal{F}$ that is arbitrarily close to $F$.
\begin{theorem}
Let $F$ be a continuous, strictly monotone cdf with finite support on a closed interval $S$. Let $\mathcal{F}$ be defined as above. 
Then, for all $\epsilon>0$, there exists a member $\hat{F} \in \mathcal{F}$ such that
\begin{equation}
    |\hat{F}(x) - F(x)| < \epsilon, \;\forall x \in S.
\end{equation}
\end{theorem}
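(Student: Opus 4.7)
The plan is to approximate $F$ via a two-stage triangle inequality,
\[
|\hat F(x) - F(x)| \le |\hat F(x) - G(x)| + |G(x) - F(x)|,
\]
where $G(x) = \sum_k w_k H(x - x_k)$ is a step function built from the same weights $w_k$ and locations $x_k$ as $\hat F$, with $H$ the Heaviside step function. I would first construct $G$ from $F$ by exploiting the uniform continuity of $F$ on the compact interval $S$, and then argue that each sigmoid $\sigma((x-x_k)/h_k)$ can be made arbitrarily close to the corresponding Heaviside by taking the bandwidth $h_k$ small.

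For the step-function stage, uniform continuity of $F$ on $S$ yields, for any $\epsilon > 0$, some $\delta > 0$ with $|x - y| < \delta$ implying $|F(x) - F(y)| < \epsilon/4$. I would choose grid points $x_1 < \cdots < x_K$ of spacing at most $\delta/2$, with $x_1$ slightly below $\inf S$ and $x_K$ slightly above $\sup S$, and set $w_k \coloneqq F(x_k) - F(x_{k-1})$ under the convention $F(x_0) \coloneqq 0$. Then $\sum_k w_k = F(x_K) = 1$, and $G(x) = F(x_m)$ for $x \in [x_m, x_{m+1})$, so $|G(x) - F(x)| < \epsilon/4$ uniformly on $S$. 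The strict positivity constraint $w_k > 0$, which may fail where $F$ is locally constant, is enforced by a convex perturbation $w_k \mapsto (w_k + \tau)/(1 + K\tau)$ with small $\tau > 0$, introducing at most $K\tau$ additional error in $\hat F$, which is negligible for $\tau$ small.

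For the sigmoid stage, I would use the tail bound $|\sigma(t) - H(t)| \le e^{-|t|}$ for $t \ne 0$ together with the crude bound $|\sigma - H| \le 1/2$. Fix $x \in S$ and partition the grid into near indices $N_x = \{k : |x - x_k| \le Th\}$ and far indices, for a threshold $T > 0$; taking all $h_k = h$ for simplicity, the far contribution to $|\hat F(x) - G(x)|$ is at most $e^{-T}\sum_{k \notin N_x} w_k \le e^{-T}$, while the near contribution is at most $\frac{1}{2}\sum_{k \in N_x} w_k$. Since near indices lie in a window of length $2Th$ around $x$, choosing $2Th + d < \delta$ (with $d$ the grid spacing) lets uniform continuity bound their cumulative weight by $\epsilon/4$. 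Setting $T = \log(8/\epsilon)$ and $h < \delta/(4T)$ yields $|\hat F(x) - G(x)| < \epsilon/4$, and combining with the step-function bound closes the proof via the triangle inequality. Under the uniform-spacing constraint of Equation~\ref{eq:uniformspread}, the same argument applies once $l$ is chosen large enough to contain the needed grid and $K$ large enough that the induced spacing falls below $\delta/2$.

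The main obstacle I anticipate is the near-indices contribution: a naive bound that treated each overlapping sigmoid transition independently would yield error of order $K \sup_k w_k$, which is far larger than $\epsilon$. The crucial observation resolving this is that the $w_k$ are telescoping increments of $F$, so $F$'s modulus of continuity directly controls the cumulative weight of transitions in any narrow window, which is exactly the scale of the sigmoid smoothing region.
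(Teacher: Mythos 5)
Your proof is correct, and it shares the paper's essential construction: uniform continuity of $F$ on the compact $S$ gives a mesh scale $\delta$, the weights are the telescoping cdf increments $w_k = F(x_k)-F(x_{k-1})$ on a grid finer than $\delta$, the bandwidths are taken small, and a triangle inequality closes the argument. Where you diverge is in how the smoothing error is controlled. The paper compares $\hat F_K$ to $F$ directly at the design points, showing $F(x_{k-1}) \le \hat F_K(x_k) \le F(x_k)$ up to an error term $\xi$ contributed by the non-adjacent sigmoids, and then dismisses $\xi$ qualitatively via $\lim_{h\to 0}\xi = 0$ before extending to arbitrary $x\in S$ through the nearest design point. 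You instead insert an intermediate Heaviside mixture $G$ and bound $|\hat F - G|$ uniformly by an explicit near/far split, using the tail bound $|\sigma(t)-H(t)|\le e^{-|t|}$ for the far kernels and the modulus of continuity of $F$ (via the telescoping weights) for the cumulative mass of the near kernels, with concrete choices of $T$ and $h$. This buys a fully quantitative treatment of exactly the step the paper handles informally (the $\xi$ term and the requirement ``$h\ll 1$''), and you additionally address the strict positivity constraint $w_k>0$, which the paper's weight assignment can violate at design points outside the support and which the paper does not discuss; the paper's version, in exchange, is shorter and stays entirely at the level of values of $F$ at grid points. Both arguments implicitly rely on being able to take $K$ (and hence the resolution) arbitrarily large despite the implementation-level bounds on $\beta_k$ and $\lambda$, so neither is weaker than the other on that point.
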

\begin{proof}
Because $F$ operates on a closed interval, it is uniformly continuous. We can therefore find $\delta > 0$ such that 
\begin{equation}
    |F(x)-F(y)|~<~\frac{\epsilon}{4}~\;~\forall x,y \in S: \left| x - y \right| < \delta.
\end{equation}
Based on equation \ref{eq:cdfmodel}, set the number of kernels $K$ and the width $\lambda$ large enough so that any $x \in S$ is within close range of two design points, i.e., $\exists k_1, k_2 \leq K: x_{k_1} < x < x_{k_2}$ with~$0 < x_{k_2} - x_{k_1} < \delta$.
Set the weights $w_k$ so that~$\sum_l^k w_l = F(x_k), \; \forall k \leq K$. Then,
\begin{equation}
    \hat{F}_K(x_{k}) = F(x_{k-1}) + \underbrace{\frac{F(x_{k}) - F(x_{k-1})}{2}}_{>0} + \xi \leq F(x_{k})~\;~\forall 2\leq  k \leq K.
\end{equation}
where $\xi$ is a small error term that results from the sigmoids that are placed around the other design points $x_{k'}$ with $k'\neq k$. We can ignore $\xi$ as $\lim_{h \to 0} \xi = 0$.
Hence,
\begin{equation}
    F(x_{k-1}) \leq \hat{F}_K(x_{k}) \leq F(x_{k}) \quad \forall\, 2 \leq k \leq K \text{ and } h \ll 1.
    \label{eq:approximation_bounds}
\end{equation}
For $x \in S$ and its nearest point $k^* \coloneqq \underset{k \leq K}{\mathrm{argmax}}\mleft(x_k \leq x\mright)$ we get:
\begin{align}
&|\hat{F}_K(x) - F(x)| \\
&\underset{\triangle \text{-ineq.}}{\leq} \!\!\! |\hat{F}_K(x) - \hat{F}_K(x_{k^*})| + |\hat{F}_K(x_{k^*}) - F(x_{k^*})| + \mleft|F(x_{k^*}) - F(x)\mright|  \nonumber \\
&\underset{\text{(\ref{eq:approximation_bounds})}}{\leq} \! \mleft|F(x_{k^*+1}) \! - \! F(x_{k^*-1})\mright| \! + \! \mleft|F(x_{k^*+1}) \! - \! F(x_{k^*}\!)\mright| \! + \! \mleft|F(x_{k^*}\!) \! - \! F(x_{k^*\!-1})\mright| \! \nonumber \\
&\underset{\triangle \text{-ineq.}}{\leq} \!\!\! 2 \mleft|F(x_{k^*}) - F(x_{k^*-1})\mright| + 2 \mleft|F(x_{k^*+1}) - F(x_{k^*})\mright|
    < \epsilon. \nonumber
\end{align}
\end{proof}
\noindent Therefore, for any continuous, strictly monotone cdf $F$ with bounded support, the function family $\mathcal{F}$ contains an arbitrarily good approximator $\hat{F}$. 

\subsection{Optimizing for the most likely choice model}
Having constructed the family of functions $\mathcal{F}$,
we now introduce a method for finding a good candidate $\hat{F}_K \in \mathcal{F}$ for approximating $F_\epsilon$, based on $N$ observed choices $\{(i_n, j_n , C_n)\}_{n \leq N} \subseteq I \times J \times \mathcal{C}$ of users, chosen items, and choice sets.
Importantly, any $\hat{F}_K$ is differentiable with respect to its parameters $\alpha_k$, $\beta_k$, $\lambda$, and the user- and item-parameters $u_i$ $v_j$, and $c_j$.
Accordingly, we propose to learn all model parameters by optimizing the negative log-likelihood (NLL) function in Equation~\ref{Eq. Negative log-likelihood} via gradient descent.

However, the main challenge to this approach is that we lack a closed form for the integrals in Equation~\ref{Eq. Negative log-likelihood}, and thus, we cannot evaluate them directly.
As a solution, we propose to approximate its gradient via Monte Carlo integration.
Unfortunately, there is no closed form solution for the inverse cdf of $F_K$, which means we cannot sample from it directly.
Instead, we use the fact that $F_K$ is a weighted average of sigmoid cdfs from which samples can be drawn directly.
Specifically, for each observation $n$ and kernel $k \in \{1,\ldots,K\}$ (cf.\ Equation~\ref{eq:cdfmodel}), we draw $S$ samples; for observation $n$ and kernel $k$, sample number $s$ is generated through the inverse cdf of the sigmoid kernel:
\begin{equation}
    {\tilde{e}}^{n,k,s} \coloneqq x_k + h_k \log\mleft( \frac{u^{n,k,s}}{1 - u^{n,k,s}} \mright),~\;~u^{n,k,s} \sim \mathcal{U}(0,1)
    .
    \label{Eq. sampling}
\end{equation}
With these samples, the choice probability can be approximated by:
\begin{align}
        &\!\!\!\!\!\!\!\!
        \int \hat{\rho}_K \mleft(e_{i_n j_n}\mright) \prod_{j' \in C_n, j' \neq j_n} \hat{F}_K \mleft(V_{i_n j_n} + e_{i_n j_n} - V_{i_n j'}\mright) de_{i_n j_n} \nonumber \\
        =&
        \int
        \sum_{k=1}^K \frac{w_k}{h_k} \sigma\mleft( \frac{e_{i_n j_n} - x_k}{h_k} \mright) \mleft( 1 - \sigma\mleft( \frac{e_{i_n j_n} - x_k}{h_k} \mright) \mright) \nonumber \\
        & \quad \cdot \prod_{j' \in C_n, j' \neq j_n} \hat{F}_K \mleft(V_{i_n j_n} + e_{i_n j_n} - V_{i_n j'}\mright) de_{i_n j_n}  
 \label{Eq. Kernel-level sampled nll} \\
        \approx
        &
        \frac{1}{S} \sum_{s=1}^{S} 
        \underbrace{
            \sum_{k=1}^K w_k
            \prod_{j' \in C_n, j' \neq j_n} \hat{F}_K \mleft(V_{i_n j_n} + {\tilde{e}}^{n, k, s} - V_{i_n j'}\mright)}_{\eqqcolon \hat{P}_{ns}} \eqqcolon \hat{P}_n.  
        \nonumber 
\end{align}
Note that due to Jensen's inequality, taking the logarithm in Equation~\ref{Eq. Kernel-level sampled nll} introduces bias which we correct for through third-order Taylor expansion as in \cite{durbin1997monte}.
This leads to our final approximated negative log-likelihood function:
\begin{equation} \label{Eq. Final approximated nll}
\begin{split}
    \hat{\mathcal{L}} 
    &\coloneqq 
    - \sum_{n}^N \mathrm{log}\mleft(\hat{P}_n\mright) 
    + \frac{\sum_{s=1}^{S} (\hat{P}_{ns} - \hat{P}_n)^2}{2S(S-1) \hat{P}_n^2} 
    - \frac{\sum_{s=1}^{S} (\hat{P}_{ns} - \hat{P}_n)^3}{3S(S-1)(S-2)\hat{P}_n^3}
\end{split}
\end{equation}
Finally, we can minimize the approximated loss $\hat{\mathcal{L}}$ through gradient descent to obtain an estimate for $F_\epsilon$.
Thereby, LCM4Rec can potentially recover any underlying choice model with IID error terms from observation data, without further a priori assumptions.

\subsection{Computational complexity}
We apply a straightforward computation of the loss in equation~\ref{Eq. Final approximated nll} 
which has a computational complexity of $\mathcal{O}\left( K^2NS|C| \right)$. The quadratic scaling in $K$ results from our sampling strategy; as described above, we must sample from the pdf $\hat{\rho}_K$ in a way that is differentiable with respect to the weights $w_k$. As a solution, we conveniently generate samples per kernel, resulting in the sums over kernels and samples-per-kernel in equation \ref{Eq. Final approximated nll}.
In practice, we see no benefit in increasing the resolution beyond $K=10$, because very fine grained peaks in the error function---if existent---barely affect choice probabilities.
Thus, since K remains small, the quadratic complexity is not a significant problem in practice.
Alternatively, linear scaling in $K$ can be achieved by sampling uniformly over the support $\hat{\rho}_K$ as in vanilla Monte-Carlo.
We initially implemented this solution but found it less efficient for small values of $K$ than our straightforward computation.

\subsection{Identification and regularization}

Finally, we tackle a challenge in the prevention of overfitting our method:
The variability of $l$ and $c_j$ renders L2-regularization on the user- and item-embeddings $\hat{u}_{i}$ and $\hat{v}_{j}$ ineffective, because for any decrease in $\hat{u}_{i}$ or $\hat{v}_{j}$, the unregularized state can be reconstructed by shrinking $l$, $h_k$ and $c_j$ accordingly. 
Consequentially, L2-regularization would lead to a collapse of $\hat{\rho}_K$.
Thus, in order to prevent overfitting, we scale the item-specific constants $c_j$ into the interval $[0,1]$ after every gradient update.
This forces the width parameter $l$ and the utilities to stay within the same scale as $c_j$.\footnote{Fixing the scale parameter $l$ does not suffice as the model could still make the estimated distribution infinitely thin by placing the weight on only a few design points.}

\section{Experimental Setup}\label{Sect. Experiments}
We perform two experiments to answer three research questions.
A key property of LCM4Rec is that it does not assume what error distribution underlies user choices, thus its accuracy should not depend on them.
Accordingly, our first research question is:
\begin{enumerate}[label=\bfseries RQ\arabic*, wide, labelwidth=!, labelindent=0pt]
    \item \textit{Does LCM4Rec return accurate and robust choice probabilities under different true choice models, i.e., different error distributions? }\label{rq:accuraterobust}
\end{enumerate}
However, accurate choice probabilities do not necessarily indicate the correct choice model is identified, since an incorrect model can still overfit, and thereby, (partially) compensate for an incorrect error distribution.
To address this possibility, we also directly validate whether LCM4Rec finds the correct choice model:
\begin{enumerate}[resume,label=\bfseries RQ\arabic*, wide, labelwidth=!, labelindent=0pt]
    \item \textit{Can LCM4Rec accurately recover the true choice model from interaction data?} \label{rq:recover}
\end{enumerate}
Accurate choice models are more robust to exposure bias, especially from item-co-exposure~\citep{krause2024mitigating}, due to better modeling competition between items.
Thus, our third research question concerns:
\begin{enumerate}[resume,label=\bfseries RQ\arabic*, wide, labelwidth=!, labelindent=0pt]
\item \textit{Is LCM4Rec more robust to exposure bias than parametric choice models?} \label{rq:exposurebias}
\end{enumerate}
To pursue our research questions and understand the importance of assumptions about user behavior, our experiments compare our LCM4Rec to parametric alternatives, in cases where the parametric models and the correct error distribution match, and in cases where they do not match.
Our data and implementation are publicly available at \url{https://github.com/krauthor/LCM4Rec_RecSys2025}.

\subsection{Experiment 1: Accuracy and robustness}
\label{Subsect. Experiment performance}
To assess \ref{rq:accuraterobust} and \ref{rq:recover}, we apply a setup similar to that of \citet{krause2024mitigating}, with the important difference that we vary the choice model underlying interactions.

\parahead{Dataset}
Our first generated dataset consists of the choices of 500 users on 500 items, where every user performed up to 500 choices from uniformly random subsets of four items.
Users could interact with the same item multiple times.
We chose the large number of interactions per users to investigate systematic, asymptotic errors that result from learning with inaccurate choice models.
Each user and item are represented by an embedding, $u_i, v_j \in \mathbb{R}^3$ respectively, sampled uniformly from the unit sphere of radius $\sqrt{2}$.
Item-specific constants $c_j \in \mathbb{R}$ are sampled from the uniform distribution: $\mathcal{U}(0,1)$. 
Final utilities are $U_{ij} \coloneqq u_i \cdot v_j + c_j + \epsilon_{ij}$ (cf.\ equation \ref{eq:utilitymodel}).

We use three different error distributions, and thereby different choice models, to generate the $\epsilon$ error terms:
\begin{enumerate*}[label=(\roman*)]
    \item \textbf{Gumbel} with $\epsilon \sim \mathrm{Gumbel}(0, 0.75)$;
    \item \textbf{Signed exponential} with $\epsilon \sim -\mathrm{Exponen\text{-}}$ $\mathrm{tial}(0, 0.75)$;
    \item \textbf{Gaussian Mixture} with $\epsilon = \frac{1}{3} N_1 + \frac{2}{3} N_2$, where $N_1 \sim \mathcal{N}(-0.75, 0.25)$ and $N_2 \sim \mathcal{N}(0.75, 0.25)$.
\end{enumerate*}
The Gumbel distribution was chosen to represent non-conservative choice behavior, whereas the signed exponential distribution represents conservative choice behavior. 
Lastly, the Gaussian mixture distribution is included because it lacks a corresponding parametric model.
Thus, its output can only be fit by non-parametric models such as ours.

To properly evaluate choice models, it is crucial that the training set contains no user-item interactions that are later used for evaluation, including negative interactions.
We follow \citet{krause2024mitigating} and construct the training and validation datasets based on the interactions of a subset of users $U_\mathrm{Train} \subset U$ on all items together with the interactions of the remaining users $U_\mathrm{Eval} = U \setminus U_\mathrm{Train}$ with half of all items $J_\mathrm{Train} \subset J$. The test set was constructed based on the interactions of $U_\mathrm{Eval}$ with the remaining half of all items $J_\mathrm{Eval} = J \setminus J_\mathrm{Train}$.

\parahead{Models}
The following models are included in our comparison:
\begin{itemize}[leftmargin=*]
    \item The \textbf{multinomial logit model (MNL) \cite{train2009discrete}} 
     is the de-facto standard discrete choice model~\cite{train2009discrete}.
    It assumes \textbf{Gumbel}-distri\-buted errors. 
    Its choice probabilities equal a softmax over the deterministic utilities
    with non-conservative cannibalisation patterns. 
    \item The \textbf{exponomial model (ENL) \cite{alptekinouglu2016exponomial}} 
      assumes that the random errors are generated by a \textbf{signed exponential} distribution with very conservative cannibalization patterns. 
    \item The \textbf{binary logit model (BL) \cite{train2009discrete}} 
    is the univariate variant of the MNL model. 
    It minimizes a logistic loss over all observed positive and negative interactions individually.
    Thus, it ignores any possible competitive effects of alternative options.
    \item The \textbf{binary cross-entropy (BCE) with negative sampling} 
    is a basic matrix factorization recommendation model that is equivalent to the BL model except that it samples the negative labels from the item corpus. 
    \item The \textbf{generalized binary cross-entropy loss (gBCE) \cite{petrov2023gsasrec}} 
    is designed to capture confidence scores more accurately than the BCE loss and the MNL with negative sampling. 
    \item Our \textbf{learned choice model for recommendation (LCM4Rec)} features non-parametric kernel density estimation to infer the correct choice probabilities and cannibalization pattern from the data,
    uniquely, without assuming an error distribution a priori.
\end{itemize}

\parahead{Hyper-parameters}
We set the embedding size for each model to match the size used in the generation process.
Learning rates and optimizer choices are determined via grid-search on the Gumbel distributed data and applied to all distributions.
For LCM4Rec we set $K=5$, $S=5$, and for gBCE we set $t=1$. 
We tried many values of K without running into any computational constraints, optimization instability, or notable overfitting. Values larger than $K=5$ did not result in better predictive accuracy, thus we chose this value to show that a small number of kernels suffice and improvements can be achieved at low computational costs.

\parahead{Evaluation}
To measure the accuracy of the choice probabilities, we use the Kullback-Leibler divergence (KLD) on the distributions over the entire evaluation item set $J_\mathrm{Eval}$.
These distributions represent the users' preferences if they could choose from the entire evaluation corpus and would be relevant during inference \citep{g2024overcoming}.
We measure predictive performance in the forms of the nDCG, negative log-likelihood (NLL), and accuracy (Acc) for the choices among all alternatives in the respective choice sets on the test set. 
For the univariate models we compute the probability of choosing an item from a choice set as the probability of choosing only that item and rejecting the others.
All results are means over 100 simulation repetitions. 
Bootstrapped one-sided t-tests determine statistical significance of performance differences to the next worse value.

\subsection{Experiment 2: Correcting for exposure bias}
\label{Subsect. Experiment bias}
To analyze how the models react to exposure bias for \ref{rq:exposurebias}, we evaluate their learned behavior under different exposure distributions.

\parahead{Dataset}
Our approach to measuring the effect of exposure is to take a small item subset $J^{\mathrm{Bias}} \subset J$, vary its exposure, and subsequently see whether the items are treated differently as a result.
Again, we prevent information leakage by avoiding interactions in the training set with user-item pairs that are also used during evaluation (see Section~\ref{Subsect. Experiment performance}).
The three error distributions are reused: Gumbel, signed exponential and Gaussian Mixture, for each we generate two pairs of two datasets with different exposure distributions: $O'$ and $O''$.
Subsequently, by comparing the pairs of learned models on the test set, 
we can see whether the change in exposure results in a difference in the learned choice models.

Our pairs of exposure distributions follow those proposed by \citet{krause2024mitigating}, to capture different kinds of exposure bias.
The first pair considers the effect of \emph{non-uniform exposure frequencies}, i.e., some items getting more exposure than others.
To generate $O'$, all items are presented equally often; whereas for $O''$, we first randomly select 25 items and then include at least one of them in every second choice set, resulting in 1.9 times as much exposure per item.
The second pair considers bias due to \emph{non-uniform competition}, i.e., the effect of being presented together with more or less popular alternatives.
For this pair, 25 items are randomly selected again, to generate $O'$ these are only presented together with the top 20 percent most-popular items, for $O''$ they are only presented with the bottom 20 percent most popular items (thus the least-popular items).
In both cases, if the choice model is robust to exposure bias, the learned preferences for the selection of 25 items should be  the same for $O'$ and $O''$ in expectation.
Therefore, large differences indicate a substantial effect from exposure, and thus, susceptibility to exposure bias, and vice-versa, small differences indicate robustness.

The number of users, items, and choices per user, the choice set size, the embeddings, and the error distributions are the same as for the first experiment (Section~\ref{Subsect. Experiment performance}). The training, validation and test set were also constructed in the same way.

\parahead{Models in comparison}
For the second experiment, we include the MNL, ENL, and LCM4Rec models.
The univariate models are excluded because previous work has already shown them to be more vulnerable to exposure bias than MNL and ENL~\cite{krause2024mitigating}.

\parahead{Hyper-parameters}
We employ the same hyper-parameters as the previous experiment except that we apply the SGD optimizer to ENL.
Preliminary experiments showed that, for ENL, the Adam optimizer is more sensitive to exposure bias from non-uniform competition without significantly improving accuracy. 

\parahead{Evaluation}
Our evaluation procedure also follows \citet{krause2024mitigating}.
To measure the effect of the simulated exposure bias we consider the average rank of the selection of items as predicted by a model trained on $O'$ or $O''$.
Our metric is the difference in predicted ranks for each pair of equivalent models where one was trained on $O'$ and the other on $O''$.
Thereby, this difference can show whether the exposure bias, i.e., non-uniform exposure or non-uniform competition, can affect the preferences inferred by the choice models.
All reported results are averages over 100 simulation repetitions. 

\subsection{Assumptions and possible extensions}
While we keep modeling assumptions general, several specifications common in real world data can be made:
First, we generated 500 choices per user with 4 options per choice, which is realistic in only some scenarios. 
Non-parametric estimators typically converge slower than parametric ones so that an evaluation on more sparse data sets would be of interest.
Second, we sampled choice sets uniformly. Real-world data often exhibits a long-tail item distribution and how such a distribution could affect our model's robustness remains open. 
Likewise, we sampled embeddings uniformly while a long-tailed utility distribution could be more realistic. Third, we assumed knowledge over users' choice sets. The sets' compositions could be partially unobserved in real-world applications as fully tracking observed options can be complex and as users can discover and interact with items outside of the observable domain. To account for such behavior one can introduce a \emph{no-choice alternative} that represents the decision \emph{not} to choose any presented option \citep{eide2021finn, train2009discrete}. Moreover, padding and cropping allows processing choice sets of varying sizes.

\section{Results and Discussion}\label{Sect. Discussion}

\begin{table}[t]
  \caption{Inference performance and robustness. The correctly specified models are highlighted with $^\dagger$ and were expected to perform best. The KLD scores refer to the choice probabilities from the entire item corpus $J_B$. Best results in bold font, second best underlined. Significantly better scores than the next worse models' follow $\mathbf{^{*}p<0.1}$; $\mathbf{^{**}p<0.05}$; $\mathbf{^{***}p<0.01}$.}
  \label{Tab. Performance on synthesized data}
  {\renewcommand*{\arraystretch}{1.2}
  \begin{tabular}{llllll}
\toprule
 &  & KLD ↓ & NLL ↓ & nDCG ↑ & Acc ↑ \\
\midrule
\multirow[t]{6}{*}{{\rotatebox{90}{\hspace{-1.3cm}Gumbel}}}
& BL & \textbf{0.028}$^{***}$ & \textbf{1.184}$^{***}$ & \textbf{0.996}$^{***}$ & \textbf{0.477}$^{***}$ \\
& BCE & 0.122$^{***}$ & 1.242$^{***}$ & \underline{0.983}$^{***}$ & 0.451$^{***}$ \\
 & gBCE & 0.163$^{***}$ & 1.272 & 0.967 & 0.415 \\
\cline{2-6}
 & MNL$^\dagger$ & \textbf{0.028}$^{***}$ & \textbf{1.184}$^{***}$ & \textbf{0.996}$^{***}$ & \textbf{0.477}$^{***}$ \\
 & ENL & 7.952 & \underline{1.188}$^{***}$ & \textbf{0.996}$^{***}$ & \underline{0.475}$^{***}$ \\
 & LCM4Rec & \underline{0.085}$^{***}$ & \textbf{1.184}$^{***}$ & \textbf{0.996}$^{***}$ & \textbf{0.477}$^{***}$ \\
\cline{1-6}
\multirow[t]{6}{*}{{\rotatebox[origin=br]{90}{\parbox{1.4cm}{\centering Signed\\ Exponential}}}}
 & BL & 1.574$^{***}$ & 1.063 & \textbf{0.997}$^{*}$ & \underline{0.538}$^{***}$ \\
& BCE & 2.310$^{***}$ & 1.146$^{***}$ & \underline{0.986}$^{***}$ & 0.510$^{***}$ \\
 & gBCE & 2.386 & 1.183 & 0.974 & 0.474 \\
\cline{2-6}
 & MNL & {1.491}$^{***}$ & \underline{1.062}$^{*}$ & \textbf{0.997}$^{***}$ & \textbf{0.539}$^{*}$ \\
 & ENL$^\dagger$ & \textbf{0.345}$^{***}$ & \textbf{1.056}$^{***}$ & \textbf{0.997}$^{***}$ & \textbf{0.539}$^{**}$ \\
 & LCM4Rec & \underline{0.459}$^{***}$ & \textbf{1.056}$^{***}$ & \textbf{0.997}$^{***}$ & \textbf{0.539}$^{*}$ \\
\cline{1-6}
\multirow[t]{6}{*}{{\rotatebox[origin=br]{90}{\parbox{1.3cm}{\centering Gaussian\\ Mixture}}}}
 & BL & 1.166$^{***}$ & 1.127$^{***}$ & \textbf{0.997}$^{***}$ & \underline{0.515}$^{***}$ \\
 & BCE & 1.815$^{***}$ & 1.199$^{***}$ & \underline{0.985}$^{***}$ & 0.482$^{***}$ \\
 & gBCE & 1.881 & 1.231 & 0.971 & 0.446 \\
\cline{2-6}
 & MNL & 1.108$^{***}$ & \underline{1.126} & \textbf{0.997}$^{***}$ & \underline{0.515}$^{***}$ \\
 & ENL & \underline{0.625}$^{***}$ & 1.130$^{***}$ & \textbf{0.997}$^{***}$ & 0.514$^{***}$ \\
 & LCM4Rec & \textbf{0.337}$^{***}$ & \textbf{1.123}$^{***}$ & \textbf{0.997}$^{***}$ & \textbf{0.516} \\
\bottomrule
\end{tabular}

  }
\end{table}

\begin{table}[t]
  \caption{Mean KLDs and standard deviation between the true and modeled error distributions.
  $^\dagger$Correctly specified model.
  }
  \label{Tab. KLDs distributions}
  \centering
  \begin{tabular}{llll}
\toprule
 & Gumbel & Sign. Exp. %
 & Gauss. Mix. \\
\midrule
MNL & 0.00$^\dagger$ & 1.25 & 0.43 \\
ENL & 1.19 & 0.00$^\dagger$ & 0.87 \\
LCM4Rec (ours) & {0.13 $\pm$ 0.07} & {0.26 $\pm$ 0.05} & {0.31 $\pm$ 0.18} \\
\bottomrule
\end{tabular}

\end{table}

\begin{figure*}[t]
    \centering
    \includegraphics[width=\textwidth]{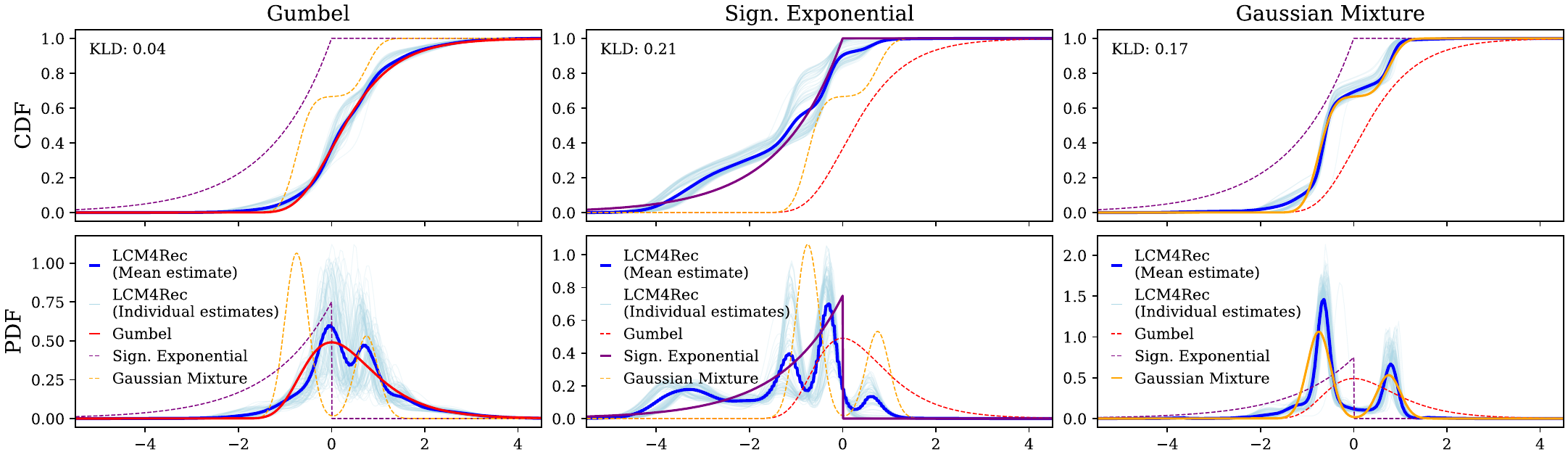}
    \caption{Estimated versus true cdfs and pdfs and KLD score between the mean estimate and the true distribution, smoothed to avoid null sets.
    Estimates are shifted to minimize the KLD to the true distribution (translation invariance in choice models).
    }
    \label{Fig. Simulation estimated pdfs}
\end{figure*}

\subsection{Accuracy and robustness} \label{Subsect. Results RQ1}

Our discussion starts by considering \ref{rq:accuraterobust}: whether LCM4Rec returns accurate and robust choice probabilities under different true choice models.
Table \ref{Tab. Performance on synthesized data} displays the performance of all models in the comparison under three different true choice models.

We see that two of the univariate models BCE and gBCE consistently achieve the worst performance across all error distributions and metrics (except ENL has worse KLD under Gumbel).
This is unsurprising, since univariate models are not designed for the choice model setting and ignore the effect of competition.

Surprisingly, the simplest univariate model, BL, consistently outperforms BCE and gBCE.
In particular, under the Gumbel distribution, BL reaches the highest performance and even outperforms the multivariate ENL on all metrics except nDCG.
BL is the univariate variant of MNL which assumes a Gumbel distribution, which may explain why BL performs so well in this setting.
Under the other distributions, BL no longer reaches the highest performance for KLD, NLL and Acc.
It seems that, whilst overall the multivariate models are more accurate in our setting, univariate models can still benefit from matching the true error distribution.

Next we consider the parametric choice models: MNL and ENL.
Clearly, we observe that each model has the best performance across all metrics when their assumed error distribution matches the true distribution, i.e., Gumbel for MNL and signed exponential for ENL.
This is expected, since the structure of the fitted model matches the true model structure.
Surprisingly, MNL and ENL always reach the highest observed nDCG score and are close the highest accuracy, regardless of the true error distribution.
The NLL is more affected but differences are still quite marginal: always below $0.08$ of the best NLL.
However, for KLD this is not the case: MNL and ENL have substantially worse KLD when their assumed error distribution is wrong.
It thus appears that the MNL and ENL are only able to compensate for their incorrect choice model assumption for the NLL, nDCG, and accuracy metrics.
But this results in substantial degraded match between the learned and true preferences in terms of KLD.
Importantly, we note that our dataset generation is idealized as choice sets are uniform random samples, potentially, this compensation does not occur in less ideal settings.

Finally, we discuss our non-parametric multivariate LCM4Rec model which is the only model that estimates the error distribution.
Strikingly, LCM4Rec reaches the best observed performance for the NLL, nDCG and Acc metrics across all error distributions.
Moreover, it has the best KLD under the Gaussian mixture distribution and second best for the signed exponential and Gumbel distributions.
Therefore, it appears LCM4Rec is only outperformed by parametric models when they correctly assume the true choice model, where the difference in KLD is still limited.
In the case of the Gaussian mixture, where there is no matching choice model, LCM4Rec has significantly better performance in terms of KLD and NLL than all other models.
As a result, LCM4Rec is the only model that has consistent good performance across all distributions and metrics.
We attribute this to the fact that LCM4Rec infers the most likely error distribution, avoiding assumptions about the true choice model.

Therefore, we can answer \ref{rq:accuraterobust} as follows: LCM4Rec is the only model that can accurately predict choice probabilities regardless of the true error distribution.
Thereby it is robust to the true user choice model and the safest choice when the exact model of user behavior is unknown.
Additionally, our results indicate that the robustness of LCM4Rec does not come with a trade-off in predictive performance, as it reaches competitive performance in all metrics. 

Lastly, we note that all our discrete choice models scored comparably well in terms of nDCG.
Thus, it appears the underlying choice model does not seem to critically affect ranking performance in our first experiment setting. 
Moreover, performance of BL shows that, under uniform exposure, exact information on item co-exposure and choice sets is not even needed for ranking purposes.
We believe that this is an artifact of our setup where item exposure is uniform (in expectation), hence our second experiment concerns a setting where this is not the case (see Section \ref{Subsect. Results RQ3}).

\begin{table}
  \caption{Exposure bias. Values show how many ranks items are ranked higher due to overexposure/competition, including 95\%-CIs.
  Best (lowest abs.) bold, second best underlined.}
  \label{Tab. Exposure bias on synthesized data}
  {\renewcommand*{\arraystretch}{1.3}
  
\begin{tabular}{llcc}
\toprule
$\epsilon$ & Model & Overexposure & Competition \\
\midrule
\multirow[t]{3}{*}{{\rotatebox{90}{\hspace{-0.85cm} \small  Gumbel}}} & MNL & -0.427 ± .401 & -1.314 ± .427 \\
 & ENL & \underline{-0.418 ± .461} & 1.701 ± .486 \\
 & LCM4Rec (ours) & \textbf{-0.258 ± .425} & \textbf{-0.000 ± .448} \\
\cline{1-4}
\multirow[t]{3}{*}{{\rotatebox{90}{\hspace{-1.0cm} \small Sign.\ Exp.}}} & MNL & -0.784 ± .311 & -6.423 ± .329 \\
 & ENL & \underline{-0.581 ± .321} & \underline{-2.021 ± .337} \\
 & LCM4Rec (ours) & \textbf{-0.212 ± .351} & \textbf{-1.571 ± .359} \\
\cline{1-4}
\multirow[t]{3}{*}{{\rotatebox{90}{\hspace{-1.15cm} \small Gauss. Mix.}}} & MNL & -0.247 ± .345 & -2.856 ± .381 \\
 & ENL & \underline{-0.100 ± .378} & \underline{0.294 ± .407} \\
 & LCM4Rec (ours) & \textbf{-0.026 ± .363} & \textbf{0.118 ± .389} \\
\bottomrule
\end{tabular}

  }
\end{table}

\begin{figure*}[tp]
    \centering    
    \includegraphics[width=\textwidth]{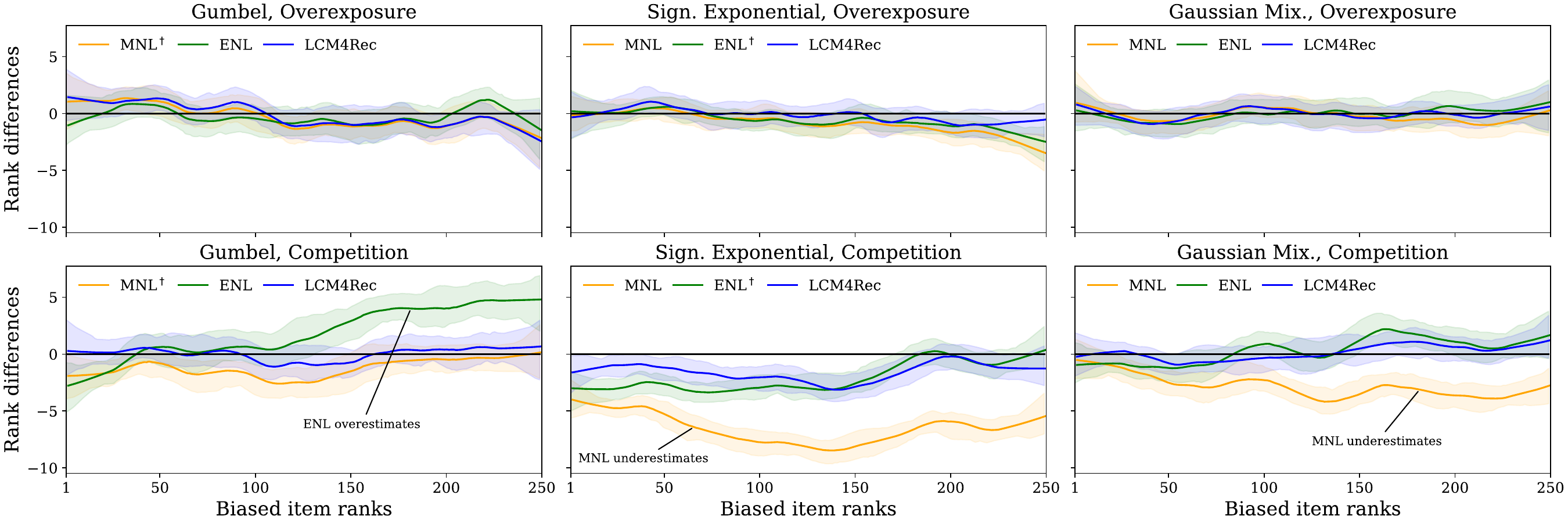}
    \caption{Exposure bias w.r.t. items' mean true ranks. 
    Regression curves and CIs based on LOWESS \citep{cleveland1979robust} and bootstrapping.
    }
    \label{Fig. Exposure bias wrt. item ranks.}
\end{figure*}

\subsection{Learning the choice model} \label{Subsect. Results RQ2}

For \ref{rq:recover}, we evaluate whether LCM4Rec can accurately recover the true error distribution.
Figure~\ref{Fig. Simulation estimated pdfs} displays the error distributions' cdfs and pdfs in our experimental setting and LCM4Rec's estimates.

Upon visual inspection, the estimates accurately approximate the error distributions.
The estimator, which learns the cdf, consistently matches the strongest cdf. 
Its derivatives also follow the true pdfs' shapes.
We observe some oscillation, especially when the error is signed exponential, a likely artifact from the resolution ($K=5$) of our kernel density estimator.
Nevertheless, LCM4Rec correctly identifies Gumbel's right skew, signed exponential's left skew, and the Gaussian mixture's two columns.
The results in Table~\ref{Tab. Performance on synthesized data} also reveal that the estimates produce competitive predictive performance (see Section~\ref{Subsect. Results RQ1}).
To quantify the match, Table \ref{Tab. KLDs distributions} shows the mean KLD scores between the true, the assumed, and the estimated distributions.
LCM4Rec scores low on KLD and much lower than MNL and ENL when their assumed distribution is incorrect.
Hence, we affirm \ref{rq:recover}: LCM4Rec can accurately recover the true error distribution's shape and learn the true user choice model.

\subsection{Exposure bias resistance} \label{Subsect. Results RQ3}
Finally, we turn to \ref{rq:exposurebias} and compare the robustness to exposure bias of LCM4Rec with that of parametric choice models.
Table~\ref{Tab. Exposure bias on synthesized data} shows how many additional ranks items obtain on average in the model predicted ranking of items when they are (i) overexposed versus uniformly exposed; or (ii) presented with popular alternatives versus unpopular alternatives.
Additionally, Figure~\ref{Fig. Exposure bias wrt. item ranks.} displays the difference in observed and true item ranks.

Our results indicate that the effect of  (i) overexposure on the behavior of all three models is limited:
Figure~\ref{Fig. Exposure bias wrt. item ranks.} shows differences are close to zero for all true item ranks.
This is also visible in Table~\ref{Tab. Exposure bias on synthesized data} which also reveals that when considering the confidence intervals of the results, the differences between the models do not appear meaningful.
Surprisingly, LCM4Rec is less affected by bias than both other models, even when their assumed error distribution is correct.
Nevertheless, due to the variance in these results, we conclude that all models appear similarly robust to overexposure.

The effect of competition results in much larger differences in Table~\ref{Tab. Exposure bias on synthesized data}.
The ENL model under Gumbel and the MNL model under signed exponential and Gaussian mixture are heavily affected by bias.
Figure~\ref{Fig. Exposure bias wrt. item ranks.} reveals that more preferred items are affected the most, the bias of the conservative ENL places these items lower in the ranking, while the non-conservative MNL places them higher.
Again, LCM4Rec is the least affected model across all distributions, surprisingly, even when the parametric models match the true distribution.
We speculate that inferring the error distribution and user preferences simultaneously leads to more robustness.

To conclude, we answer \ref{rq:exposurebias} as follows:
Whilst all three choice models appear robust to exposure bias from overexposure, the parametric models can be affected exposure bias from competition depending on what the users' choice model is.
In contrast, our results indicate LCM4Rec is the only model that is robust to both types of bias regardless of what the true users' choice model is.

\section{Conclusion}\label{Sect. Conclusion}

Our work concerns the optimization of multivariate choice models that predict how users choose from a set of recommended items.
Specifically, we addressed the limitation that existing parametric choice models require a priori assumptions about the error distribution that characterizes how users make choices; existing parametric choice models learn user preferences but not \emph{how} choices are made.
Moreover, our experimental results reveal that when their assumptions do not match the true user behavior, they return inaccurate choice probabilities and are susceptible to exposure bias.

In response, we propose LCM4Rec, the first non-parametric multivariate model for recommendation.
In contrast with parametric models, LCM4Rec simultaneously infers the most likely error distribution \emph{and} user preferences.
Thereby, it both learns \emph{what} users prefer and \emph{how} they choose.
It optimizes a kernel density estimator that can approximate any distribution to find the most likely error distribution underlying the users' choice model.
As a result, LCM4Rec alleviates the need to make assumptions about the correct users' choice model.
Our experimental results show that LCM4Rec successfully recovers the correct choice model from observed user interactions.
Furthermore, our results indicate that---~across all tested users' choice models~---only LCM4Rec has consistent competitive predictive performance and is robust to exposure bias coming from competition.
Therefore, we conclude that LCM4Rec provides the most robust way to learn user preferences from their observed choices and model users' choice behavior, while avoiding a priori assumptions about user behavior.

Our work presents promising directions for future work:
LCM4\-Rec assumes errors are IID, a future extension could tackle the non-IID case.
Finally, our experiments were limited to synthetic data but LCM4Rec has the potential to find the most likely real-world user choice model from real-world choice set data.

\begin{acks}
This work is supported by the Dutch Research Council (NWO) under grant VI.Veni.222.269.
All content represents the opinion of the authors, which is not necessarily shared or endorsed by their respective employers and/or sponsors.
\end{acks}

\balance
\bibliography{sample-base}

\end{document}